\newcommand{\ceil}[1]{\lceil #1 \rceil}
\newcommand{\exclude}[1]{}
\begin{document}

\title{New tabulation and sparse dynamic programming based techniques 
for sequence similarity problems}

\author{Szymon Grabowski}

\institute{
	  Lodz University of Technology, Institute of Applied Computer Science,\\
	  Al.\ Politechniki 11, 90--924 {\L}\'od\'z, Poland
	  \email{sgrabow@kis.p.lodz.pl}
}

\maketitle

\begin{abstract}
Calculating the length of a longest common subsequence (LCS) 
of two strings $A$ and $B$ of length $n$ and $m$
is a classic research topic, with many 
worst-case oriented results known.
We present two algorithms for LCS length calculation with 
respectively $O(mn \log\log n / \log^2 n)$ 
and $O(mn / \log^2 n + r)$ time complexity,
the latter working for $r = o(mn / (\log n \log\log n))$, 
where $r$ is the number of matches in the dynamic programming matrix.
We also describe conditions for a given problem 
sufficient to apply our techniques, with several concrete 
examples presented, namely the edit distance, LCTS 
and MerLCS problems.
\end{abstract}

\section{Introduction}
\noindent 
Measuring the similarity of sequences is an old research topic and 
many actual measures are known in the string matching literature.
One classic example concerns the computation of a  
longest common subsequence (LCS)
in which a subsequence that is common to all sequences and has the 
maximal possible length is looked for.
A simple dynamic programming (DP) solution works in $O(mn)$ time for two 
sequences of length $n$ and $m$, respectively, 
but faster algorithms are known.
The LCS problem has many applications in diverse areas, 
like version control systems, 
comparison of DNA strings,
structural alignment of RNA sequences.
Other related problems comprise calculating the edit (Levenshtein) 
distance between two sequences, the longest common transposition-invariant 
subsequence, or LCS with constraints in which the longest common subsequence
of two sequences must contain, or exclude, some other sequence.

Let us focus first on the LCS problem, for two sequences $A$ and $B$. 
It is defined as follows. 
Given two sequences, $A = a_1 \ldots a_n$ and $B = b_1 \ldots b_m$, 
over an alphabet $\Sigma$ of size $\sigma$, find a longest subsequence 
$\langle a_{i_1}, a_{i_2}, \ldots, a_{i_{\ell}}\rangle$ of $A$ 
such that 
$a_{i_1} = b_{j_1}, a_{i_2} = b_{j_2},\ldots, a_{i_{\ell}} = b_{j_{\ell}}$, 
where $1 \leq i_1 < i_2 < \ldots < i_{\ell} \leq n$ 
and $1 \leq j_1 < j_2 < \ldots < j_{\ell} \leq m$.
The found sequence may not be unique.
W.l.o.g. we assume $n \geq m$.
To avoid uninteresting complications, we also assume that $m = \Omega(\log^2 n)$.
Additionally, we assume that $\sigma = O(m)$.
The case of a general alphabet, however, can be handled with standard means, 
i.e., we can initially map the sequences $A$ and $B$ onto 
an alphabet of size $\sigma' = O(m)$, in $O(n\log\sigma')$ time, 
using a balanced binary search tree.
We do not comprise this tentative preprocessing step 
in further complexity considerations.

Often, a simplified version of the LCS problem is considered, when 
one is interested in telling only the length of a longest common subsequence (LLCS).

In this paper we present two techniques for finding the LCS length, 
one (Section~\ref{sec:alg1}) based on tabulation and improving the result of 
Bille and Farach-Colton~\cite{BFC2008} by factor $\log\log n$, 
the other (Section~\ref{sec:alg2}) combining tabulation and 
sparse dynamic programming and being slightly faster if the number 
of matches is appropriately limited.
In Section~\ref{sec:apps} 
we show the conditions necessary to apply these algorithmic techniques. 
Some other, LCS-related, problems fulfill these conditions, so we immediately 
obtain new results for these problems as well.

Throughout the paper, we assume the word-RAM model of computation.
All used logarithms are base 2.

\section{Related work}
\label{sec:related_work}
\noindent 
A standard solution to the LCS problem is based on dynamic programming, 
and it is to fill a matrix $M$ of size $(n+1) \times (m+1)$, where each 
cell value depends on a pair of compared symbols from $A$ and $B$ 
(that is, only if they match or not), 
and its (at most) three already computed neighbor cells.
Each computed $M[i,j]$ cell, $1 \leq i \leq n, 1 \leq j \leq m$, 
stores the value of $LLCS(A[1\ldots i], B[1\ldots j])$.
A well-known property describes adjacent cells:
$M(i,j) - M(i-1,j) \in \{0, 1\}$ 
and $M(i,j) - M(i,j-1) \in \{0, 1\}$ for all valid $i$, $j$.

Despite almost 40 
years of research, surprisingly little can be said about 
the worst-case time complexity of LCS. 
It is known that in the very restrictive model of unconstrained alphabet 
and comparisons with equal/unequal answers only, the lower bound is 
$\Omega(mn)$ \cite{WC1976}, which is reached by a trivial DP algorithm.
If the input alphabet is of constant size, 
the known lower bound is simply $\Omega(n)$, 
but if total order between alphabet symbols exists 
and $\leq$-comparisons are allowed, then the lower bound grows to 
$\Omega(n \log n)$ \cite{Hir1978}.
In other words, the gap between the proven 
lower bounds and the best worst-case algorithm is huge. 

A simple idea proposed in 1977 by Hunt and Szymanski \cite{HS1977} has become 
a milestone in LCS reseach, and the departure point for theoretically 
better algorithms (e.g.,~\cite{EGGI92}).
The Hunt--Szymanski (HS) algorithm is essentially based on dynamic programming, 
but it visits only the matching cells of the matrix, typically a small fraction 
of the entire set of cells. 
This kind of selective scan over the DP matrix is 
called {\em sparse dynamic programming} (SDP).
We note that the number of all matches in $M$, denoted with the symbol $r$, 
can be found in $O(n)$ time, 
and after this (negligible) preprocessing we can decide if 
the HS approach is promising to given data.
More precisely, the HS algorithm works in $O(n + r\log m)$ 
or even $O(n + r\log\log m)$ time.
Note that in the worst case, i.e., for $r = \Theta(mn)$, 
this complexity is however superquadratic.

The Hunt--Szymanski concept was an inspiration for a number of subsequent 
algorithms for LCS calculation, 
and the best of them, the algorithm of Eppstein et al.~\cite{EGGI92}, 
achieves $O(D \log\log(\min(D,mn/D)))$ worst-case time 
(plus $O(n\sigma)$ preprocessing), 
where $D \leq r$ is the number of so-called dominant matches in $M$ 
(a match $(i, j)$ is called dominant iff $M[i,j] = M[i-1,j]+1 = M[i,j-1]+1$).
Note that this complexity 
is $O(mn)$ for any value of $D$.
A more recent algorithm, by Sakai~\cite{S2012},
is an improvement if the alphabet is very small 
(in particular, constant), as its time complexity is 
$O(m\sigma + \min(D\sigma, p(m - q)) + n)$, 
where $p = LLCS(A, B)$ and 
$q = LLCS(A[1\ldots m], B)$.

A different approach is to divide the dynamic matrix into small blocks, 
such that the number of essentially different blocks is small enough to be 
precomputed before the main processing phase.
In this way, the block may be processed in constant time each, making 
use of a built lookup table (LUT).
This ``Four Russians'' technique was first used to the LCS problem 
by Masek and Paterson~\cite{MP1980}, for a constant alphabet, 
and refined by Bille and Farach-Colton~\cite{BFC2008} to work with an 
arbitrary alphabet.
The obtained time compexities were 
$O(mn/\log^2 n)$ and $O(mn(\log\log n)^2/\log^2 n)$, respectively, 
with linear space.

A related, but different approach, is to use bit-paralellism to 
compute several cells of the dynamic programming matrix at a time. 
There are a few such variants (see~\cite{Hyy2004} and references therein),
all of them working in 
$O(\ceil{m/w}n)$ worst-case time, after $O(\sigma \ceil{m/w} + m)$-time 
and $O(\sigma m)$-space preprocessing, 
where $w \geq \log n$ is the machine word size.

Yet another line of research considers the input sequences in compressed 
form.
There exist such LCS algorithms for RLE-, LZ- 
and grammar-compressed inputs~\cite{LWL2008,CLZ2003,G2012}.
We briefly mention two results.
Crochemore et al.~\cite{CLZ2003} exploited the LZ78-factorization 
of the input sequences over a constant alphabet, 
to achieve $O(h mn /\log n)$ time, where $h \leq 1$ is the 
entropy of the inputs.
Gawrychowski~\cite{G2012} considered the case of two strings 
described by SLPs (straight line programs) of total size $n$, 
to show a solution computing their edit distance in 
$O(n N \sqrt{\log(N/n)})$ 
time, where $N$ is the sum of their (non-compressed) length.

Some other LCS-related results can be found in the surveys~\cite{Apo1997,BHR2000}.

\section{LCS in $O(mn\log\log n/\log^2 n)$ time}
\label{sec:alg1}

In this section we modify the technique of 
Bille and Farach-Colton (BFC)~\cite[Sect.~4]{BFC2008}, improving its 
worst-case time complexity by factor $\log\log n$, 
to achieve $O(mn\log\log n/\log^2 n)$ time, 
with linear space.

We divide the dynamic programming matrix $M[0 \ldots n, 0 \ldots m]$ 
into rectangular blocks with shared borders, of size $(x_1+1) \times (x_2+1)$, 
and process the matrix in horizontal stripes of $x_2$ rows.
By ``shared borders'' we mean that e.g. the bottom row of some block 
being part of its output is also part of the input of the block below.
Values inside each block depend on: 
\begin{description}
  \item[$(i)$] $x_1$ corresponding symbols from sequence $A$,
  \item[$(ii)$] $x_2$ corresponding symbols from sequence $B$,
  \item[$(iii)$] the top row of the block, which can be encoded 
                differentially in $x_1$ bits,
  \item[$(iv)$] the leftmost column of the block, which can be 
                 encoded differentially in $x_2$ bits.
\end{description}
We use the BFC technique of alphabet remapping in superblocks 
of size $y \times y$.
W.l.o.g. we assume that $x_2$ divides $y$.
For each substring $B[j'y+1 \ldots (j'+1)y]$
its symbols are sorted and 
$q \leq y$
unique symbols are found.
Then, the $y$ symbols are remapped to $\Sigma_{B_{j'}} = \{0 \ldots q-1\}$, 
using a balanced BST.
Next, for each symbol from a snippet $A[i'y+1 \ldots (i'+1)y]$ 
we find its encoding in $\Sigma_{B_{j'}}$, or assign $q$ to it 
if it wasn't found there.
This takes $O(\log y)$ time per symbol, 
and the overall alphabet remapping time for the whole matrix 
is $O(m\log y + mn\log y/y)$.

This remapping technique allows to represent the symbols from 
the input components $(i)$ and $(ii)$ on 
$O(\log\min(y+1, \sigma))$ bits each, 
rather than $\Theta(\log\sigma)$ bits.
It works because not the actual symbols from $A$ and $B$ 
are important for LCS computations, but only 
equality 
relations between them.
To simplify notation, let us assume a large enough alphabet 
so that $\min(y+1, \sigma) = y+1$.

Now, for each (remapped) substring of length $x_2$ from sequence $B$ 
we build a lookup table for fast handling of the blocks in one horizontal stripe.
Once a stripe is processed, its LUT is discarded to save space.
This requires to compute the answers for all possible inputs in components 
$(i)$, $(iii)$ and $(iv)$ (the component $(ii)$ is fixed for a given stripe).
The input thus takes 
$x_1 \log(y+1) + x_1 + x_2 = x_1 \log(2(y+1)) + x_2$ bits.

The return value associated with each LUT key are the bottom and 
the right border of a block, in differential form 
(the lowest cell in the right border and the rightmost cell in the bottom 
border are the same cell, which is represented twice; 
once as a difference (0 or 1) to its left neighbor in the bottom border 
and once as a difference (0 or 1) to its upper neighbor in the right border)
and the difference between the values of the bottom right and the top left 
corner (to know the explicit value of $M$ in the bottom right corner), 
requiring $x_1 + x_2 + \log(\min(x_1, x_2)+1) $ bits in total.
Fig.~\ref{fig:dp} illustrates.

\begin{figure}[pt]
\centerline{
\includegraphics[width=0.98\textwidth,scale=1.0]{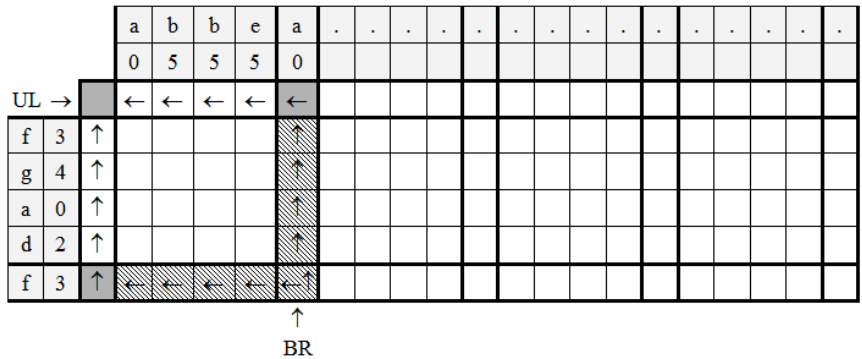}
}
\caption{One horizontal stripe of the DP matrix, with 4 blocks of size 
$5 \times 5$ ($x_1 = x_2 = 4$).
The corresponding snippets from sequence $A$ and $B$ are \texttt{abbea} 
and \texttt{fgadf}, respectively.
These snippets are translated to a new alphabet (the procedure for creating 
the new alphabet is not shown here) of size 6, where the characters from $A$ 
are mapped onto the alphabet $\{0, 1, \ldots, 4\}$ and value $5$ is used for 
the characters from $B$ not used in the encoding of the symbols from 
$A$ belonging to the current superblock (the superblock is not shown here).
The LCS values are stored explictly in the dark shaded cells.
The white and dark shaded cells with arrows are part of the input, and their 
LCS values are encoded differentially, with regard to their left or upper neighbor.
The diagonally shaded cells are the output cells, also encoded differentially.
The bottom right corner (BR) is stored in three forms: 
as the difference to its left neighbor (0 or 1), 
as the difference to its upper neighbor (0 or 1) 
and the value of UL plus the difference between BR and UL.
The difference between BR and UL is part of the LUT output for the current block.}
\label{fig:dp}
\end{figure}

As long as the input and the output of a LUT fits a machine word, i.e., 
does not exceed $w$ bits, where $w \geq \log n$, we will process 
one block in constant time. 
Still, the LUT building costs also impose a limitation.
More precisely, we are going to minimize the total time of remapping 
the alphabet in all the superblocks, building all $O(m/x_2)$ 
LUTs and finally processing all the blocks, which is described by 
the formula:
\begin{equation*}
O(m\log y + mn\log y/y +
   (m/x_2) 2^{x_1 \log(2(y+1)) + x_2} x_1 x_2 + 
   mn/(x_1 x_2)), 
\end{equation*}
where $2^{x_1 \log(2(y+1)) + x_2}$ is the number of all possible LUT 
inputs and the $x_1 x_2$ multiplier corresponds to the computation time 
per one LUT cell.
Let us set $y = \log^2 n / 2$, 
$x_1 = \log n / (4\log\log n)$ and $x_2 = \log n / 4$.
In total we obtain $O(mn\log\log n /\log^2 n)$ time 
with $o(n)$ extra space (for the lookup tables, used one at a time, 
and alphabet remapping), 
which improves the Bille and Farach-Colton result by factor $\log\log n$.
The improvement is achieved thanks to using multiple lookup tables 
(one per horizontal stripe).
Formally, we obtain the following theorem.

\begin{theorem}\label{th:lcs1}
The length of the longest common subsequnce (LCS) between two sequences, 
$A$, of length $n$, and $B$, of length $m$, 
where $n \geq m \geq \log^2 n$, both over an integer alphabet, 
can be computed in $O(mn\log\log n /\log^2 n)$ worst-case time.
The algorithm needs $o(n)$ words of space, apart for the two sequences 
themselves.
\end{theorem}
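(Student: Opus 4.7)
The plan is to follow the Four Russians blueprint of Bille and Farach-Colton but sharpen it by using a separate lookup table for each horizontal stripe, which lets the block dimension along $A$ grow by a factor of $\log\log n$ without inflating LUT construction. First I would partition $M$ into horizontal stripes of $x_2$ rows, then into blocks of size $(x_1+1)\times(x_2+1)$ with shared borders, and observe that a block's interior is fully determined by four inputs: the $x_1$ symbols of $A$ spanned by the block, the $x_2$ symbols of $B$ spanned by the stripe, the top row of the block (encoded differentially as $x_1$ bits using the adjacency property $M(i,j)-M(i,j-1)\in\{0,1\}$), and the leftmost column (encoded differentially as $x_2$ bits).

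Next I would reduce the effective alphabet locally. Using superblocks of size $y\times y$ with $x_2\mid y$, each superblock's fragment of $B$ contains at most $q\le y$ distinct symbols; renumber them via a balanced BST and map each symbol of the corresponding fragment of $A$ into the same range, assigning the sentinel $q$ for symbols absent from $B$'s fragment. Correctness follows because LCS depends only on equality, and each per-block $A$- and $B$-symbol now fits in $O(\log(y+1))$ bits instead of $\Theta(\log\sigma)$. This preprocessing costs $O(m\log y+mn\log y/y)$.

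With the remapping in place, I would build, for each of the $O(m/x_2)$ stripes, a fresh lookup table keyed on the $x_1$ remapped $A$-symbols together with the $x_1+x_2$ bits encoding the top row and left column; the stored value holds the differential encoding of the bottom row and the right column plus the $O(\log(\min(x_1,x_2)+1))$-bit offset of the bottom-right corner relative to the top-left corner, so that the explicit LLCS value is recoverable as the stripe is traversed. Each LUT entry is filled by running the trivial DP on the block in $O(x_1x_2)$ time. Processing the stripe then amounts to a left-to-right sweep where each block is resolved in $O(1)$ by a single table lookup, after which the stripe's LUT is discarded to keep space at $o(n)$ words.

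The technical heart is choosing $x_1,x_2,y$ so that everything balances and fits a machine word. Summing alphabet remapping, LUT construction for all stripes, and per-block processing yields the bound
\[
O\!\left(m\log y+\frac{mn\log y}{y}+\frac{m}{x_2}\,2^{x_1\log(2(y+1))+x_2}\,x_1x_2+\frac{mn}{x_1x_2}\right).
\]
Setting $y=\log^2 n/2$, $x_1=\log n/(4\log\log n)$ and $x_2=\log n/4$, the LUT input takes $x_1\log(2(y+1))+x_2\approx \tfrac{3}{4}\log n$ bits and the output $O(\log n)$ bits, so both fit in a word of width $w\ge\log n$; the LUT-construction term becomes $m\cdot x_1\cdot n^{3/4}=o(mn/\log^2 n)$ and hence negligible, the alphabet-remapping term $mn\log y/y=O(mn\log\log n/\log^2 n)$ matches the block-processing term $mn/(x_1x_2)=O(mn\log\log n/\log^2 n)$, and $m\log y=O(m\log\log n)$ is absorbed by the assumption $m=\Omega(\log^2 n)$. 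The main obstacle is precisely this parameter tuning: verifying that the chosen $(x_1,x_2,y)$ simultaneously keep LUT keys word-sized, make LUT building subdominant, and leave the two $\Theta(mn\log\log n/\log^2 n)$ terms as the bottleneck; the rest is routine bookkeeping on differential encodings.
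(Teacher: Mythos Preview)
Your proposal is correct and follows essentially the same route as the paper: per-stripe lookup tables with the $B$-snippet fixed, BFC-style superblock alphabet remapping, the identical cost formula, and the identical parameter choices $y=\log^2 n/2$, $x_1=\log n/(4\log\log n)$, $x_2=\log n/4$. The only slip is expository: it is the block dimension along $B$ (namely $x_2$) that grows by a $\log\log n$ factor relative to BFC, not the one along $A$, but your actual construction and analysis have this right.
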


\section{LCS in $O(mn/\log^2 n + r)$ time (for some $r$)}
\label{sec:alg2}

In this algorithm we also work in blocks, of size $(b+1) \times (b+1)$, 
but divide them into two groups: 
sparse blocks are those which contain at most $K$ matches 
and dense blocks are those which contain more than $K$ matches.
Obviously, we do not count possible matches on the input boundaries of 
a block.

We observe that knowing the left and top boundary of a block 
plus the location of all the matches in the block is enough to 
compute the remaining (right and bottom) boundaries.
This is a nice property as it eliminates the need to (explicitly) 
access the corresponding substrings of $A$ and $B$.

The sparse block input will be encoded as:
\begin{description}
  \item[$(i)$] the top row of the block, represented 
               differentially in $b$ bits,
  \item[$(ii)$] the leftmost column of the block, represented 
               differentially in $b$ bits,
  \item[$(iii)$] the match locations inside the block, 
                each in $\log(b^2)$ bits, totalling $O(K \log b)$ bits.
\end{description}

Each sparse block will be computed in constant time, thanks to a LUT.
Dense blocks, on the other hand, will be partitioned into smaller blocks, 
which in turn will be handled with our algorithm from 
Section~\ref{sec:alg1}.
Clearly, we have $b = O(\log n)$ (otherwise the LUT build costs 
would be dominating)
and $b = \omega(\log n/\sqrt{\log\log n})$
(otherwise this algorithm would never 
be better than the one from Section~\ref{sec:alg1}),
which implies that 
$K = \Theta(\log n / \log\log n)$, with an appropriate constant.

As this algorithm's worst-case time is $\Omega(mn/\log^2 n)$, 
it is easy to notice that the preprocessing costs 
for building required LUTs and alphabet mapping will not dominate. 
Each dense block is divided into smaller blocks of size 
$\Theta(\log n/\log\log n) \times \Theta(b)$.
Let the fraction of dense blocks in the matrix be denoted as $f_d$. 
The total time complexity (without preprocessing) is then
\begin{equation*}
O((1-f_d) mn/b^2 + f_d (mn\log\log n/(b\log n))).
\end{equation*}
The fraction $f_d$ must be $o(1)$, otherwise this algorithm 
is not better in complexity than the previous one.
This also means that $1-f_d$ may be replaced with $1$ in further 
complexity considerations.

Recall that $r$ is the number of matches in the DP matrix.
We have $f_d = O((r/K) / (mn/b^2)) = O(rb^2\log\log n/(mn\log n))$.
From the $f_d = o(1)$ condition we also obtain that 
$rb^2 = o(mn\log n / \log\log n)$.
If $r = o(mn / (\log n \log\log n))$, 
then we can safely use 
the maximum possible value of $b$, i.e., $b = \Theta(\log n)$ 
and obtain the 
time of $O(mn/\log^2 n)$.

Unfortunately, in the preprocessing we have to find  
and encode all matches in all sparse blocks, which requires $O(n + r)$ time.
Overall, this leads to the following theorem.

\begin{theorem}\label{th:lcs2}
The length of the longest common subsequnce (LCS) between two sequences, 
$A$, of length $n$, and $B$, of length $m$, 
where $n \geq m \geq \log^2 n$, both over an integer alphabet, 
can be computed in $O(mn/\log^2 n + r)$ worst-case time, 
assuming $r = o(mn / (\log n \log\log n))$, 
where $r$ is the number of matching pairs of symbols between $A$ and $B$.
\end{theorem}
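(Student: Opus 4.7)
The plan is to follow the hybrid scheme already outlined in the section: partition $M$ into $(b+1)\times(b+1)$ blocks with shared borders, classify each block as \emph{sparse} (at most $K$ interior matches) or \emph{dense} (more than $K$), handle the sparse ones with a single lookup table and the dense ones by recursively invoking the algorithm of Theorem~\ref{th:lcs1} on a finer block decomposition. The target parameters are $b=\Theta(\log n)$ and $K=\Theta(\log n/\log\log n)$, chosen so that the sparse-block LUT input fits in one machine word and the LUT construction cost is negligible.

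First I would formalise the sparse-block LUT. Using the encoding $(i)$--$(iii)$ above, the input of a sparse block occupies $2b+O(K\log b)$ bits; plugging in the target values of $b$ and $K$ this is $O(\log n)$, so the input and (analogously encoded) output fit in a machine word, and each sparse block is processed in $O(1)$ time after the LUT is built. The LUT has $2^{O(\log n)}=n^{O(1)}$ entries, each computable in $O(b^2)=O(\log^2 n)$ time by the naive DP; with a sufficiently small constant hidden in $K$ this stays $o(mn/\log^2 n)$. I would also note that finding and encoding the match positions inside all sparse blocks is exactly the overall preprocessing cost of $O(n+r)$ (the $O(n)$ term covers indexing the occurrences of each symbol, after which the matches are enumerated in $O(r)$).

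Next I would handle the dense blocks. Each such block is subdivided into sub-blocks of size $\Theta(\log n/\log\log n)\times\Theta(b)$ and processed by the algorithm of Theorem~\ref{th:lcs1}, which runs in $O(b^2\log\log n/\log n)$ time per dense block. Letting $f_d$ denote the fraction of dense blocks, the total (non-preprocessing) time becomes
\begin{equation*}
O\!\left((1-f_d)\,\frac{mn}{b^2} + f_d\cdot\frac{mn\log\log n}{b\log n}\right),
\end{equation*}
and since each dense block contributes more than $K$ matches, we have $f_d\cdot(mn/b^2)\cdot K=O(r)$, hence $f_d=O(rb^2\log\log n/(mn\log n))$. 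Taking $b=\Theta(\log n)$ makes the first term $O(mn/\log^2 n)$, while the hypothesis $r=o(mn/(\log n\log\log n))$ ensures $f_d=o(1)$ and the second term is absorbed. Adding back the $O(n+r)$ preprocessing yields the claimed $O(mn/\log^2 n+r)$ bound.

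The main obstacle I expect is the parameter balancing in the second paragraph: one needs $b$ as large as possible (to make $mn/b^2$ small) and $K$ as large as possible (to shrink $f_d$), but simultaneously $2b+O(K\log b)=O(w)$ for the sparse LUT to fit in a word, and $2^{2b+O(K\log b)}\,b^2=o(mn/\log^2 n)$ so that the LUT build does not dominate. Verifying that $b=\Theta(\log n)$ together with $K=\Theta(\log n/\log\log n)$ simultaneously satisfies all three inequalities (with appropriately small hidden constants) is the delicate step; once this is in place the accounting above is routine.
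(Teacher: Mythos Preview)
Your proposal is correct and mirrors the paper's own argument almost line by line: the same sparse/dense block split with threshold $K$, the same LUT encoding for sparse blocks, the same delegation of dense blocks to the Section~\ref{sec:alg1} algorithm on $\Theta(\log n/\log\log n)\times\Theta(b)$ sub-blocks, the same parameter choices $b=\Theta(\log n)$ and $K=\Theta(\log n/\log\log n)$, and the same $O(n+r)$ match-enumeration preprocessing. The only blemish is a slip in the per-dense-block cost---it should be $O(b\log\log n/\log n)$ rather than $O(b^2\log\log n/\log n)$, since a $b\times b$ block contains $\Theta(b\log\log n/\log n)$ constant-time sub-blocks---but your displayed total-time formula $f_d\,mn\log\log n/(b\log n)$ is already the correct one and agrees with the paper.
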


Considering to the presented restriction on $r$, 
the achieved complexity is better than 
than the result from the previous section. 

On the other hand, it is essential to compare the obtained time complexity 
with the one from Eppstein et al. algorithm~\cite{EGGI92}.
All we know about the number of dominant matches $D$ is that 
$D \leq r$\footnote{A slightly more precise bound on $D$ is 
$\min(r, m^2)$, but it may matter, in complexity terms, 
only if $m = o(n)$ (cf. also~\cite[Th.~1]{S2012}), 
which is a less interesting case.},
so we replace $D$ with $r$ in their complexity formula
to obtain $O(r \log\log(\min(r,mn/r)))$ in the worst case. 
Our result is better if $r = \omega(mn/(\log^2 n \log\log\log n))$ 
and $r = o(mn)$.
Overall, it gives the niche of 
$r = \omega(mn/(\log^2 n \log\log\log n))$ 
and 
$r = o(mn\log\log n/\log^2 n)$
in which the algorithm presented in this section is competitive.

The alphabet size is yet another constraint.
From the comparison to Sakai's algorithm~\cite{S2012} 
we conclude that our algorithm needs $\sigma = \omega(\log\log\log n)$ 
to dominate for the case of 
$r = \omega(mn/(\log^2 n \log\log\log n))$.

\section{Algorithmic applications}
\label{sec:apps}

The techniques presented in the two previous sections may be applied 
to any sequence similarity problem fulfilling certain properties.
The conditions are specified in the following lemma.

\begin{lemma}\label{lem:2d}
Let $Q$ be a sequence similarity problem 
returning the length of a desired subsequence, 
involving two sequences, $A$ of length $n$ and $B$ of length $m$, 
both over a common integer alphabet $\Sigma$ of size $\sigma = O(m)$.
We assume that $1 \leq m \leq n$.
Let $Q$ admit a dynamic programming solution in which 
$M(i,j) - M(i-1,j) \in \{-1, 0, 1\}$, 
$M(i,j) - M(i,j-1) \in \{-1, 0, 1\}$
for all valid $i$ and $j$, 
and $M(i,j)$ depends only on the values of its (at most) three 
neighbors $M(i-1,j)$, $M(i,j-1)$, $M(i-1,j-1)$, 
and whether $A_i = B_j$.

There exists a solution to $Q$ with $O(mn\log\log n/\log^2 n)$
worst-case time.
There also exists a solution to $Q$ with $O(mn/\log^2 n + r)$ 
worst-case time, for 
$r = o(mn / (\log n \log\log n))$, 
where $r$ is the number of symbols pairs 
$A_i$, $B_j$ such that $A_i = B_j$.
The space use in both solutions is $O(n)$ words.
\end{lemma}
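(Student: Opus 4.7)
The plan is to show that both algorithms of Sections~\ref{sec:alg1} and~\ref{sec:alg2} transfer to any problem $Q$ fulfilling the stated hypotheses, with at most constant-factor adjustments in the bit accounting. Three structural facts about LCS were used in those constructions: (a) every cell value is determined by its (at most) three upper/left neighbors together with the single bit $[A_i = B_j]$; (b) consecutive cell differences along a row or column are bounded in absolute value by a constant; and (c) alphabet symbols enter the recurrence only through equality tests. All three are asserted directly in the hypotheses of the lemma. In particular, a block of size $(x_1+1)\times(x_2+1)$ is still entirely determined by its top row, its leftmost column, and the $x_1 \times x_2$ equality pattern between the corresponding snippets of $A$ and $B$.

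First I would rerun the argument of Section~\ref{sec:alg1}. Because consecutive differences now lie in $\{-1,0,1\}$ rather than $\{0,1\}$, each differentially encoded boundary cell consumes $\log 3$ bits instead of $1$ bit, which is still a constant. The superblock alphabet-remapping step is unaffected, since it relies solely on preserving equality between symbols of $A$ and $B$. Hence the LUT input still fits in $O(x_1 \log y + x_2)$ bits, and the output in $O(x_1 + x_2 + \log(x_1+x_2))$ bits (the signed corner difference $M(BR)-M(UL)$ has absolute value at most $x_1+x_2$ by property (b), so $O(\log(x_1+x_2))$ bits with a sign bit suffice). The parameter choices $y = \log^2 n/2$, $x_1 = \Theta(\log n/\log\log n)$, $x_2 = \Theta(\log n)$ remain feasible, and the identical balancing of remapping, LUT-building and block-processing costs yields the $O(mn\log\log n/\log^2 n)$ time bound with $o(n)$ extra space.

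Next I would repeat the sparse/dense block partition of Section~\ref{sec:alg2}. For a sparse $(b+1)\times(b+1)$ block containing at most $K$ matches, encoding the top row and leftmost column differentially in $O(b)$ bits and the match positions in $O(K \log b)$ bits provides enough information to drive the recurrence inside the block, since by hypothesis a cell's value depends on its neighbors and on a single equality bit; so a lookup table resolves each such block in constant time. Dense blocks are subdivided and handled by the generalized first algorithm. With $b = \Theta(\log n)$ and $K = \Theta(\log n/\log\log n)$, the analysis of the dense-block fraction $f_d$ is unchanged, and under $r = o(mn/(\log n \log\log n))$ the total running time is $O(mn/\log^2 n + r)$, with $O(n)$ words of space dominated by match storage and the currently active LUT. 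The main (and essentially only) thing to verify is that all LUT inputs still fit in a machine word after the constant-factor inflation from the ternary difference alphabet and the signed corner offset; since the parameter choices above carry constant-factor slack, no reparameterisation is required and this presents no real obstacle.
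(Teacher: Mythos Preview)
Your proposal is correct and follows exactly the approach taken in the paper: apply the algorithms of Sections~\ref{sec:alg1} and~\ref{sec:alg2} verbatim, observing that the only change---differences in $\{-1,0,1\}$ rather than $\{0,1\}$---inflates the differential encodings by a constant factor and hence affects only the constants in the parameter choices. The paper's own proof says precisely this in two sentences; your version simply spells out the bit-accounting in more detail.
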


\begin{proof}
We straightforwardly apply the ideas presented in the previous 
two sections.
The only modification is to allow a broader range of 
differences ($\{-1, 0, 1\}$) between adjacent cells in the 
dynamic programming matrix. 
This only affects a constant factor in parameter setting.
\qed
\end{proof}

Lemma~\ref{lem:2d} immediately serves to calculate the edit (Levenshtein) 
distance between two sequences (in fact, the BFC technique 
was presented in terms of the edit distance).
We therefore obtain the following theorem.

\begin{theorem}\label{th:edit}
The edit distance between two sequences, $A$, of length $n$, 
and $B$, of length $m$, where $n \geq m \geq \log^2 n$, 
both over an integer alphabet, 
can be computed in $O(mn\log\log n /\log^2 n)$ worst-case time.
Alternatively, the distance can be found in $O(mn/\log^2 n + r)$ 
worst-case time, for $r = o(mn / (\log n \log\log n))$, 
where $r$ is the number of symbols pairs 
$A_i$, $B_j$ such that $A_i = B_j$.
The space use in both solutions is $O(n)$ words.
\end{theorem}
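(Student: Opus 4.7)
The plan is to obtain Theorem~\ref{th:edit} as an immediate corollary of Lemma~\ref{lem:2d}, so the proof reduces to checking that the edit distance problem fits the abstract framework of that lemma. I would begin by writing down the standard dynamic programming recurrence, namely $M(i,j) = \min(M(i-1,j)+1,\ M(i,j-1)+1,\ M(i-1,j-1) + [A_i \neq B_j])$ with the boundary conditions $M(i,0)=i$ and $M(0,j)=j$, and then verify each hypothesis of the lemma in turn.

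The three hypotheses to verify are: (a) $M(i,j)$ depends only on its three neighbors $M(i-1,j)$, $M(i,j-1)$, $M(i-1,j-1)$ together with the equality bit $[A_i=B_j]$; (b) the horizontal difference $M(i,j)-M(i-1,j)$ lies in $\{-1,0,1\}$; and (c) the analogous bound for the vertical difference. Hypothesis (a) is immediate from the recurrence. For (b) I would argue on both sides: the inequality $M(i,j)\le M(i-1,j)+1$ is built into the recurrence (prepend a deletion of $A_i$ to an optimal alignment of $A[1\ldots i{-}1]$ and $B[1\ldots j]$), and conversely $M(i-1,j)\le M(i,j)+1$ follows because any alignment realizing $M(i,j)$ can be turned into one of $A[1\ldots i{-}1]$ with $B[1\ldots j]$ at the cost of at most one additional edit. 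The vertical bound (c) is symmetric.

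Once the three hypotheses are in place, the two algorithms supplied by Lemma~\ref{lem:2d} apply verbatim, yielding the $O(mn\log\log n/\log^2 n)$ bound and the $O(mn/\log^2 n + r)$ bound under the stated restriction on $r$, together with the $O(n)$-word space guarantee. I do not expect a genuine obstacle: the only point needing explicit mention is that, unlike LCS, the adjacent-cell differences here can be negative as well as positive, which is precisely why Lemma~\ref{lem:2d} was formulated with the wider range $\{-1,0,1\}$; the proof of the lemma already absorbs this generalization into a constant factor in the parameter setting, so nothing further needs to be done.
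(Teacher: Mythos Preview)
Your proposal is correct and matches the paper's approach exactly: the paper derives Theorem~\ref{th:edit} as an immediate consequence of Lemma~\ref{lem:2d}, simply noting that the edit distance DP satisfies the lemma's hypotheses (and remarking that the BFC technique was in fact originally stated for edit distance). Your write-up actually supplies more detail than the paper itself, since the paper does not spell out the verification of the $\{-1,0,1\}$ difference bounds.
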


Another feasible problem is the longest common transposition-invariant 
subsequence (LCTS)~\cite{MNU2004,D2006}, in which 
we look for a longest subsequence of the form 
$(s_1 + t)(s_2 + t) \ldots (s_{\ell} + t)$ 
such that all $s_i$ belong to $A$ (in increasing order), 
all corresponding values $s_i + t$ belong to $B$ (in increasing order), 
and $t \in \{-\sigma+1 \ldots \sigma-1\}$ is some integer, called a transposition.
This problem is motivated by music information retrieval.
The best known results for LCTS are 
$O(mn\log\log\sigma)$~\cite{NGMD2005,D2006} and 
$O(mn\sigma(\log\log n)^2/\log^2 n)$ if the BFC technique is applied 
for all transpositions (which is $O(mn)$ if $\sigma = O(\log^2 n/(\log\log n)^2)$).
Applying the former result from Lemma~\ref{lem:2d}, 
for all possible transpositions, gives immediately 
$O(mn\sigma\log\log n/\log^2 n)$ time complexity 
(if $\sigma = O(n^{1-\varepsilon})$, for any $\varepsilon > 0$, 
otherwise the LUT build costs would dominate).
Applying the latter result requires more care.
First we notice that the number of matches over 
all the transpositions sum up to $mn$, 
so $\Theta(mn)$ is the total preprocessing cost.
Let us divide the transpositions into dense ones 
and sparse ones, where the dense ones are those that have 
at least $mn\log\log n/\sigma$ matches.
The number of dense transpositions is thus limited to 
$O(\sigma/\log\log n)$.
We handle dense transpositions with the technique from Section~\ref{sec:alg1} 
and sparse ones with the technique from Section~\ref{sec:alg2}.
This gives us 
$O(mn + mn(\sigma/\log\log n)\log\log n/\log^2 n + 
mn\sigma/\log^2 n) = O(mn(1 + \sigma/\log^2 n))$ total time, 
assuming that 
$\sigma = \omega(\log n (\log\log n)^2)$, 
as this condition on $\sigma$ implies the number of matches 
in each sparse transposition limited to 
$o(mn / (\log n \log\log n))$, 
as required.
We note that $\sigma = \omega(\log^2 n/(\log\log n)^2)$ 
and $\sigma = O(\log^2 n)$ is the niche in which our algorithm 
is the first one to achieve $O(mn)$ total time.

\begin{theorem}\label{th:lcts}
The length of the longest common transposition-invariant 
subsequence (LCTS) 
between two sequences, $A$, of length $n$, 
and $B$, of length $m$, where $n \geq m \geq \log^2 n$, 
both over an integer alphabet of size $\sigma$, 
can be computed in $O(mn(1 + \sigma/\log^2 n))$ worst-case time, 
assuming that $\sigma = \omega(\log n (\log\log n)^2)$.
\end{theorem}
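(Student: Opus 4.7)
The plan is to reduce LCTS to running, for each of the $2\sigma-1$ candidate transpositions $t \in \{-\sigma+1,\ldots,\sigma-1\}$, a two-sequence similarity problem that fits the hypotheses of Lemma~\ref{lem:2d}. Concretely, fix $t$ and define $M_t(i,j)$ as the length of the longest common transposition-invariant subsequence of $A[1\ldots i]$ and $B[1\ldots j]$ using shift $t$; the recurrence takes $M_t(i,j)=M_t(i-1,j-1)+1$ when $A_i+t=B_j$ and $M_t(i,j)=\max(M_t(i-1,j),M_t(i,j-1))$ otherwise. I would first verify that this recurrence meets all assumptions of Lemma~\ref{lem:2d}: neighbor differences lie in $\{0,1\}\subseteq\{-1,0,1\}$, and each cell depends only on its three neighbors and on the equality test $A_i+t\stackrel{?}{=}B_j$. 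Thus both algorithms from Lemma~\ref{lem:2d} apply as black boxes to each transposition, and the final LCTS length is the maximum over $t$.

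The second ingredient is a global match-budget argument. Each pair $(i,j)\in[n]\times[m]$ is a match for exactly one transposition, namely $t=B_j-A_i$, so $\sum_t r_t = mn$, where $r_t$ denotes the number of matches at shift $t$. I would then partition the transpositions into dense ones (those with $r_t \geq mn\log\log n/\sigma$) and sparse ones (the rest). The budget immediately bounds the number of dense transpositions by $O(\sigma/\log\log n)$. For each dense transposition I apply the first bound of Lemma~\ref{lem:2d}, giving total dense cost $O((\sigma/\log\log n)\cdot mn\log\log n/\log^2 n) = O(mn\sigma/\log^2 n)$. For each sparse transposition I apply the second bound, yielding total sparse cost $O((\sigma)\cdot mn/\log^2 n + \sum_{t\text{ sparse}} r_t) = O(mn\sigma/\log^2 n + mn)$. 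Adding in the $O(n+mn)$ cost of enumerating matches across all shifts (easily done in a single pass, e.g.\ by bucketing indices by symbol value), the total is $O(mn(1+\sigma/\log^2 n))$, matching the claim.

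The main obstacle is checking that the applicability precondition of Theorem~\ref{th:lcs2}, namely $r_t = o(mn/(\log n\log\log n))$, is actually satisfied for every sparse transposition. By definition of sparse, $r_t < mn\log\log n/\sigma$, so I need $mn\log\log n/\sigma = o(mn/(\log n\log\log n))$, equivalently $\sigma=\omega(\log n(\log\log n)^2)$. This is precisely the hypothesis of the theorem, and it is the reason the dense/sparse threshold was calibrated at $mn\log\log n/\sigma$ rather than at some more naive value. With this bookkeeping in place the proof reduces to adding up the two contributions and checking that neither alphabet-remapping nor LUT construction in the $O(\sigma)$ invocations of Lemma~\ref{lem:2d} dominates (they are absorbed into the $O(mn)$ preprocessing since each invocation's overhead is $o(mn/\sigma)$ under the given alphabet assumption).
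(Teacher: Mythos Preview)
Your proposal is correct and follows essentially the same route as the paper: the paper also observes $\sum_t r_t = mn$, splits transpositions at the same threshold $mn\log\log n/\sigma$, bounds the number of dense ones by $O(\sigma/\log\log n)$, applies the Section~\ref{sec:alg1} algorithm to dense transpositions and the Section~\ref{sec:alg2} algorithm to sparse ones, and uses the hypothesis $\sigma=\omega(\log n(\log\log n)^2)$ exactly to guarantee that every sparse $r_t$ is $o(mn/(\log n\log\log n))$. Your extra remark that the sparse-block LUT and preprocessing overheads across the $O(\sigma)$ invocations are absorbed is a point the paper leaves implicit.
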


A natural extension of Lemma~\ref{lem:2d} is to involve more than two 
(yet a constant number of) sequences. 
In particular, problems on three sequences have practical importance.

\begin{lemma}\label{lem:3d}
Let $Q$ be a sequence similarity problem 
returning the length of a desired subsequence, 
involving 
three sequences, $A$ of length $n$, $B$ of length $m$ and $P$ of length $u$, 
all over a common integer alphabet $\Sigma$ of size $\sigma = O(m)$.
We assume that $1 \leq m \leq n$ 
and $u = \Omega(n^c)$, for some constant $c > 0$.
Let $Q$ admit a dynamic programming solution in which 
$M(i,j,k) - M(i-1,j,k) \in \{-1, 0, 1\}$, 
$M(i,j,k) - M(i,j-1,k) \in \{-1, 0, 1\}$
and
$M(i,j,k) - M(i,j,k-1) \in \{-1, 0, 1\}$, 
for all valid $i$, $j$ and $k$,
and $M(i,j)$ depends only on the values of its 
(at most) seven neighbors:
$M(i-1,j,k)$, $M(i,j-1,k)$, $M(i-1,j-1,k)$, 
$M(i,j,k-1)$, $M(i-1,j,k-1)$, $M(i,j-1,k-1)$ and $M(i-1,j-1,k-1)$, 
and whether $A_i = B_j$, $A_i = P_k$ and $B_j = P_k$.

There exists a solution to $Q$ with $O(mnu/\log^{3/2} n)$
worst-case time.
The space use is $O(n)$ words.
\end{lemma}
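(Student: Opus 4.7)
The plan is to lift the tabulation scheme of Section~\ref{sec:alg1} from two axes to three. I would partition the DP cube $M[0\ldots n, 0\ldots m, 0\ldots u]$ into aligned cubelets of side $b+1$ with $b = \Theta(\sqrt{\log n})$, grouped into joint superblocks of side $y = \Theta(\log n)$. Within each joint superblock the $O(y)$ symbols drawn from $A$, $B$ and $P$ are merged, sorted and remapped to a local alphabet of size $O(y)$; this preserves every pairwise equality the recurrence is allowed to inspect and encodes each symbol in $O(\log y) = O(\log \log n)$ bits. Summed over the $mnu/y^3$ joint superblocks, the sorting and the BST lookups cost $O(mnu \log y / y^2) = o(mnu / \log^{3/2} n)$.

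The LUT key for a single cubelet would be: the three ``input'' faces (the sub-matrices of the cubelet at the smallest $i$, $j$, $k$) each encoded differentially in two bits per cell, since by hypothesis consecutive differences lie in $\{-1, 0, 1\}$; together with the three remapped substrings of length $b$ from $A$, $B$ and $P$. Accounting for the shared edges and the single shared corner, the key takes $O(b^2) + 3b\log(y+1) = O(\log n)$ bits and therefore fits in one machine word. The stored value is the analogous differential encoding of the three opposite output faces, plus the offset from the near corner to the far corner to recover absolute values; it is a function of the key by the seven-neighbor recurrence in the hypotheses of the lemma.

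Exactly as in Section~\ref{sec:alg1}, a single LUT would be shared by the entire beam of cubelets along the $A$ axis for each aligned pair $(j', k')$ of $B$- and $P$-slabs, giving $mu / b^2$ tables in total. Each table has $2^{\alpha \log n} = n^{\alpha}$ entries for some constant $\alpha < 1$ controlled by the constants hidden in $b$ and $y$, and each entry is populated in $O(b^3) = O(\log^{3/2} n)$ time by brute-force DP inside one cubelet. The total construction cost is $O((mu/b^2)\, n^{\alpha} b^3)$, which stays below the $O(mnu/b^3)$ processing budget; here the hypothesis $u = \Omega(n^c)$ plays the same boundary-regime role that $m = \Omega(\log^2 n)$ does in Section~\ref{sec:alg1}, ensuring that the preprocessing is amortized by enough cubelets. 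Each of the $mnu/b^3$ cubelets is then resolved by one lookup, giving the claimed $O(mnu / \log^{3/2} n)$ total, and $O(n)$ words of working space because each beam's LUT is discarded before the next one is built.

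The genuinely delicate step is the bit budget of the LUT key: the three input faces alone contribute $\Theta(b^2)$ bits, so $b$ cannot exceed $c\sqrt{\log n}$ for a small $c$, and this is exactly what forces the $\log^{3/2} n$ saving factor rather than the $\log^2 n / \log\log n$ factor of the planar case. Everything else --- choosing $y$, amortizing the remapping, and amortizing the table construction --- is a routine rerun of the two-dimensional analysis once $b = \Theta(\sqrt{\log n})$ is pinned down.
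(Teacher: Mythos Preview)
Your proposal is essentially the paper's own argument: cubelets of side $b=\Theta(\sqrt{\log n})$, differential encoding of the three input faces in $O(b^2)$ bits, alphabet remapping in supercubes, and a lookup per cubelet. The observation that the $\Theta(b^2)$ face bits are what pin $b$ to $\sqrt{\log n}$, hence the $\log^{3/2} n$ speed-up, is exactly the point.

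There is one organisational difference worth flagging. The paper orients its ``columns'' along the $P$-axis, i.e.\ one LUT per $b\times b$ position in the $A\times B$ plane, with only the $P$-snippet in the key; this yields $mn/b^2$ tables and total build time $\Theta(mn\cdot n^{\alpha}\cdot b)$, which must be dominated by $mnu/b^3$---and \emph{that} is precisely where $u=\Omega(n^c)$ is spent. You instead put \emph{all three} snippets into the key and still claim $mu/b^2$ distinct tables; with your key design those tables are identical, so a single global LUT of size $n^{\alpha}$ already suffices and its build time $n^{\alpha}b^3$ is $o(n)$. Either way your build cost is bounded by $n^{\alpha}b^4/n$ times the processing cost, which is $o(1)$ for $\alpha<1$ regardless of $u$. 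So your appeal to $u=\Omega(n^c)$ does not match your own construction; in the paper's column orientation it is genuinely needed, in yours it is not. This is a wobble in the write-up, not in the mathematics.
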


\begin{proof}
The solution works on cubes of size $b \times b \times b$, 
setting $b = \Theta(\sqrt{\log n})$ with an appropriate constant.
Instead of horizontal stripes, 3D ``columns'' of size $b \times b \times u$ 
are now used.
The LUT input consists of $b$ symbols from sequence $P$, 
encoded with respect to a supercube in $O(\log\log n)$ bits each, 
and three walls, of size $b \times b$ each, 
in differential representation.
The output are the three opposite walls of a cube.
The restriction $u = \Omega(n^c)$ implies that 
the overall time formula {\em without the LUT build times} is 
$\Omega(mn^{1+c}/\log^{3/2} n)$, which is $\Omega(mn^{1+c'})$,
for some constant $c'$, $c \geq c' > 0$.
The build time for all LUTs can be made $O(mn^{1+c''})$, 
for any constant $c'' > 0$, if the constant associated with $b$ 
is chosen appropriately.
We now set $c'' = c'$ to show the build time for the LUTs 
is not dominating.
\qed
\end{proof}

As an application of Lemma~\ref{lem:3d} we present 
the merged longest common subsequence (MerLCS) problem~\cite{HYTAP2008}, 
which involves three sequences, $A$, $B$ and $P$, 
and its returned value is a longest sequence $T$ that is 
a subsequence of $P$ and can be split into two subsequences 
$T'$ and $T''$ such that $T'$ is a subsequence of $A$ 
and $T''$ is a subsequence of $B$.
Deorowicz and Danek~\cite{DD2013} showed that in the DP formula 
for this problem $M(i,j,k)$ is equal to or larger by 1 than 
any of the neighbors: 
$M(i-1,j,k)$,
$M(i,j-1,k)$
and $M(i,j,k-1)$.
They also gave an algorithm working 
in $O(\lceil u/w \rceil mn \log w)$ time.
Peng et al.~\cite{PYHTH2010} gave an algorithm
with $O(\ell mn)$ time complexity, where $\ell \leq n$ is the 
length of the result.
Motivations for the MerLCS problem, from bioinformatics 
and signal processing, can be found e.g. in~\cite{DD2013}.

Based on the cited DP formula property~\cite{DD2013} 
we can apply Lemma~\ref{lem:3d} to 
obtain $O(mnu/\log^{3/2} n)$ 
time for MerLCS (if $u = \Omega(n^c)$ for some $c > 0$), 
which may be competitive with existing solutions.

\begin{theorem}\label{th:merlcs}
The length of the merged longest common subsequence (MerLCS)
involving three sequences, $A$, $B$ and $P$, of length respectively 
$n$, $m$ and $u$, where $m \leq n$ 
and $u = \Omega(n^c)$, for some constant $c > 0$, 
all over an integer alphabet of size $\sigma$, 
can be computed in $O(mnu/\log^{3/2} n)$ worst-case time.
\end{theorem}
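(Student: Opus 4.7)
The plan is to reduce the theorem directly to Lemma~\ref{lem:3d}, so the core task is to verify that MerLCS admits a dynamic programming formulation satisfying the hypotheses of that lemma. I would first recall the standard DP for MerLCS: letting $M(i,j,k)$ denote the length of the longest merged common subsequence of $A[1\ldots i]$, $B[1\ldots j]$, and $P[1\ldots k]$, one has the recurrence
\begin{equation*}
M(i,j,k) = \max\bigl(M(i,j,k-1),\; [A_i = P_k]\cdot(1+M(i-1,j,k-1)),\; [B_j = P_k]\cdot(1+M(i,j-1,k-1))\bigr),
\end{equation*}
with the appropriate base cases, where $[\cdot]$ denotes an Iverson bracket (and the branches guarded by a failing equality are simply discarded). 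Thus $M(i,j,k)$ depends only on a constant number of its neighbors in the $b\times b\times b$ cube and on the outcomes of the symbol comparisons $A_i = B_j$, $A_i = P_k$, $B_j = P_k$, exactly as required by Lemma~\ref{lem:3d}.

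Next I would verify the adjacent-difference property. The cited result of Deorowicz and Danek~\cite{DD2013} states that $M(i,j,k)$ equals or exceeds by $1$ each of $M(i-1,j,k)$, $M(i,j-1,k)$, and $M(i,j,k-1)$; in particular all three pairwise differences lie in $\{0,1\}\subset\{-1,0,1\}$, which comfortably satisfies the hypothesis of Lemma~\ref{lem:3d}. (If needed, I would give a short monotonicity argument: extending any of the three input prefixes by one symbol can only preserve or extend the best merged subsequence by at most one symbol, since a new merged subsequence of length $\ell+2$ would truncate to one of length $\ell+1$ already achievable from the smaller prefixes.)

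With both structural hypotheses established, the theorem follows by invoking Lemma~\ref{lem:3d}, whose conclusion gives an $O(mnu/\log^{3/2} n)$ worst-case time bound under the assumption $u = \Omega(n^c)$, which is exactly the hypothesis we are given. Since Lemma~\ref{lem:3d} also encapsulates the alphabet-remapping preprocessing and the LUT build costs for the $3$D version of the tabulation scheme from Section~\ref{sec:alg1}, no further analysis is needed. The only real subtlety, and the step I would spend the most care on, is checking that all seven neighbor dependencies used in the lemma's lookup-table abstraction are legitimately captured by the three binary match indicators $[A_i=B_j]$, $[A_i=P_k]$, $[B_j=P_k]$ (the $A_i = B_j$ flag is not used by the recurrence above but its inclusion is harmless, since the LUT can simply ignore bits that do not affect the output). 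This makes the proof essentially a direct appeal to Lemma~\ref{lem:3d}.
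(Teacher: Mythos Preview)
Your approach matches the paper's exactly: cite the Deorowicz--Danek bounded-difference property and invoke Lemma~\ref{lem:3d}. The paper does not even write out the recurrence; it simply observes that the DD2013 property places MerLCS within the scope of the lemma.

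One technical slip: the recurrence you display is incomplete. Take $A=ba$, $B=c$, $P=b$; then $M(2,1,1)=1$ (the subsequence $b$ of $A$ matches $P$), but your formula gives $\max(M(2,1,0),\,0,\,0)=0$ since $A_2\neq P_1$ and $B_1\neq P_1$. The standard recurrence also includes $M(i-1,j,k)$ and $M(i,j-1,k)$ in the maximum (these cover the case where the last match in $A$ or $B$ occurs strictly before position $i$ or $j$). This does not damage your argument---the corrected recurrence still depends only on the allowed neighbors and comparison bits, so Lemma~\ref{lem:3d} applies just the same---but you should fix the displayed formula before using it to justify the local-dependency hypothesis.
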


\section{Conclusions}
\label{sec:conc}

On the example of the longest common subsequence problem 
we presented two algorithmic techniques, making use of tabulation 
and sparse dynamic programming paradigms, 
which allow to obtain competitive time complexities.
Then we generalize the ideas by specifying conditions on 
DP dependencies whose fulfilments lead to immediate 
applications of these techniques.
The actual problems considered here as applications 
comprise the edit distance, LCTS and MerLCS.

As a future work, we are going to relax the DP dependencies, 
which may for example improve the SEQ-EC-LCS result from~\cite{DG2014}.
Another research option is to try to improve the tabulation based 
result on compressible sequences.

\section*{Acknowledgments}
The author wishes to thank Sebastian Deorowicz 
for helpful comments on a preliminary version of the manuscript.

\bibliographystyle{abbrv}
\bibliography{lcs}

\end{document}